\newtheorem{theorem}{Theorem}
\newtheorem{preposition}{Proposition}
\begin{document}

\preprint{APS/123-QED}

%\title{Master stability for traveling waves in coupled systems}
\title{Master stability for traveling waves on networks}
% Force line breaks with \\
%\thanks{computed using numerical continuation}%

\author{Stefan Ruschel}
%\homepage{}
\affiliation{
University of Leeds, United Kingdom
}%

\author{Andrus Giraldo}
\affiliation{%
Korea Institute for Advanced Study, South Korea
}%
\date{\today}% It is always \today, today,  but any date may be explicitly specified

\begin{abstract}
We present a framework for determining effectively the spectrum and stability of traveling waves on networks with symmetries, such as rings and lattices, by computing master stability curves (MSCs). Unlike traditional methods, MSCs are independent of system size and can be readily used to assess wave destabilization and multi-stability in small and large networks.
\end{abstract}

\keywords{waves, equivariant dynamics, Floquet spectrum }%Use showkeys class option if keyword
\maketitle

Since the pioneering works of Fermi, Pasta, Ulam, and Tsingou \cite{fermi1955studies, zabusky1965interaction}, discrete traveling waves (DTWs) are studied in areas such as neuroscience \cite{erneux1993propagating,ermentrout2001traveling, avitabile2023bump}, nonlinear optics \cite{watanabe1996dynamics,alfaro2020pulse,parker2023standing}, chemical reactions \cite{totz2018spiral}, crystal dislocations \cite{aigner2003new,iooss2006normal,parker2021stationary}, Ising-like phase transitions \cite{bates1999discrete, Evans2015XY}, conservation laws \cite{sprenger2024hydrodynamics}, coupled oscillators \cite{laing2011fronts,perlikowski2010routes}, traffic jams \cite{orosz2005bifurcations}, etc. DTWs have received significant attention both analytically \cite{friesecke1994existence,chow1998traveling} and numerically \cite{abell2005computation}.
DTWs have also been observed in discretizations of continuous spatially extended systems, such as discrete-nonlinear-Schrödinger or Frenkel-Kantorova-type models \cite{ablowitz2004discrete,kevrekidis2009discrete,braun2004frenkel}.
Waves with time-varying profile (e.g. discrete
breathers) have also been considered \cite{flach1998discrete, Duran2022}.

Considering the importance of DTWs, there is a fundamental need for a unifying theory describing their spectral properties. Characterizing the stability of DTWs has remained challenging often tackled with specific approximations \cite{parker2023standing}. Rigorous results can be obtained for special cases; for instance, for zero wave speed \cite{kapitula2001stability} and in weakly coupled rings \cite{ermentrout1985behavior}. Similarly, the stability of DTWs has been studied previously on feed-forward ring networks \cite{klinshov2017embedding, perlikowski2010periodic}, and feed-forward discrete tori \cite{kantner2015delay}.

In this letter, we outline a master stability theory for DTWs on networks and coupled systems of ordinary differential equations that are invariant under cyclic index shift ($\mathbb{Z}_N$-equivariant) \cite{golubitsky2012symmetry}. Notable examples include systems of $N$ identical nodes coupled in a (possibly nonlocal) ring, spatially discretized PDEs with periodic boundary conditions, and systems with higher-order coupling topologies \cite{bick2023higher} invariant under cyclic index shift, as well as ---by little extension--- multi-chromatic networks, discrete tori, and systems with higher spatial embedding order (see Supplementary material for details). 

The proposed framework is a generalization of the master stability function approach for synchronization in networks (when all nodes behave dynamically the same) developed by Pecora and Carroll~\cite{pecora1998master}, and shares similarities with classic methods for assessing the stability of traveling waves in continuous systems \cite{whitham2011linear}. Specifically, we recast the higher dimensional Floquet spectral problem as a delay differential equation, possibly of mixed type, the master stability equation. Using the derived equation, we introduce a numerical scheme based on the continuation of a suitable two-point boundary value problem (2PBVP) to assess the stability of such waves independent of network size by computing master stability curves (MSCs).

%%%%
\paragraph*{Setup.} To set the stage, we consider the $N$-dimensional network with $d$-dimensional internal dynamics given by 
\begin{align}
    x^\prime_n =& F(x_n) + \!\!\!\!\sum_{1 \leq |m| \leq r}\!\!\!\! H_m(x_n,x_{n+m}), \quad x_n(t)\in\mathbb {R}^d. \label{eq:x}  
\end{align}
Here, $x_n$, $n=1,2,\ldots, N$,  represents a node in the network, and $x_n^\prime$ denotes its first derivative with respect to time $t$. The function $F$ defines the uncoupled dynamics at each node $x_n$. Each node $x_n$ is coupled to nodes $x_{n+m}$ (using the convention $x_{j}=x_{j+N}$ for $j\in \mathbb{Z}$) through the functions $H_m$, $1 \leq |m| \leq r$, where $r$ is the coupling radius of the network. A comprehensive treatment of systems featuring higher-order interactions (see, e.g., \cite{bick2023higher}) can be found in the Supplementary material. Note that system~\eqref{eq:x} is equivariant under cyclic permutation $x_n\mapsto x_{n+1}$ (mod $N$); and hence supports DTWs \cite{golubitsky2012symmetry}. 

Our results pertain to the stability of DTWs. More precisely, let $\gamma=(x_1,\ldots,x_N)$ be a DTW of system~(\ref{eq:x}), where each component of $\gamma$ is of the form
\begin{align}
    x_n(t)=y(t-n\tau), \quad y(t)\in\mathbb {R}^d,\label{eq:y}  
\end{align}
where $y$ is the $T$-periodic profile of the wave propagating with constant wave speed $c=1/\tau$ (one unit of space per unit of time). In finite networks, the consistency relation $k=\tau/T$, $k\in \mathbb{Q}$, must hold, where $k$ is the wave number of the DTW ($1/k$ being the wavelength).

%Since we are dealing with finite networks, there is a consistency relation $k=\tau/T$ that defines a wave number $k$ ($1/k$ being the wavelength). 

Figure~\ref{fig:1}(a)-(b) shows two examples of such DTWs with wave number $k=1/20$ and constant wave speed $c \approx 0.6764$ in system (\ref{eq:FHN}) with (a) $N=20$ and (b) $N=100$ nodes. In both cases, the wave profile is the same since,  for $k=M_0/N_0$ and  $\gcd(M_0,N_0)=1$, the DTWs corresponding to the profile $y$  can be supported in a network of size $N=KN_0$, where $K \in \mathbb{N}$, by concatenating $K$ copies of the profile. This does not change the wave number or period. As such, Fig.~\ref{fig:1}(a) shows the minimal network size realization $N_0=20$ of  profile $y$. 

Stability of the DTW $\gamma$ can be assessed by computing its Floquet exponent spectrum  (FES). Specifically, a Floquet exponent (FE) $\lambda \in \mathbb{C}$ and its nonzero bundle $z=(z_1, ... , z_N)$ can be computed as solutions of the $Nd$-dimensional Floquet 2PBVP 
\begin{equation}\label{eq:x-lin-unscaled} 
\begin{aligned}
    z^\prime_n(t) =& A(t-n\tau) z_n(t) + \!\!\!\!\sum_{1 \leq |m| \leq r}\!\!\!\! B_{m}(t-n\tau)z_{n+m}(t),\\
    z_n(t)=&e^{-\lambda T} z_n(t+T),
\end{aligned}
\end{equation}
where $z_n(t)\in\mathbb{C}^d$,
with $T$-periodic coefficient functions 
\begin{align}
A(t)=&DF(y(t))+ \!\!\!\!\sum_{1 \leq |m| \leq r}\!\!\!\! D_1 H_m(y(t),y(t-m\tau)),\nonumber\\
B_{m}(t)=&D_2 H_m(y(t),y(t-m\tau)).\nonumber
\end{align}
Here, $D_1 H_m$ and $D_2 H_m$ denote the Jacobians of $H_m$ with respect to its first and second arguments, respectively. The FES has size $Nd$ (counting multiplicity) and consists of all  FEs with imaginary parts inside the interval $[-\pi i/T, \pi i/T)$.  Particularly, if all Re$(\lambda)<0$ except for the trivial exponent $\lambda=0$ (Goldstone mode), then the DTW $\gamma$ is (orbitally) stable. There are multiple ways to numerically obtain the FES, ranging from explicitly solving the Floquet 2PBVP to finding the eigenvalues of the monodromy matrix \cite{KuznetsovBook2004}. In any case, the computations become infeasible or inaccurate for large system size $N$, or period $T$.

\paragraph*{Master stability equation.} Our approach overcomes these issues by leveraging Theorem 1 (Supplementary material), which states that every bundle $z$ of DTW $\gamma$ satisfying~(\ref{eq:x-lin-unscaled}) is one-to-one with a nonzero periodic solution of the problem 
\begin{equation}
\begin{aligned}  
    \zeta^\prime(t) =& \left(A(t)-\lambda I\right) \zeta(t) + \!\!\!\!\sum_{1 \leq |m| \leq r}\!\!\!\! e^{im\phi} B_{m}(t) \zeta(t-m\tau),\\
    \zeta(t)=&~ \zeta(t+T),
\end{aligned}\label{eq:v-master}
\end{equation}
the \emph{master stability equation (MSE),} where $\zeta(t)\in \mathbb{R}^d$ and $\phi=2\pi l/N,$ for some $l=1,...,N$. Equation~(\ref{eq:v-master}) serves as a generalization of the Master stability function by Pecora and Carrol to traveling waves on networks. Moreover, for a simple FE $\lambda$, its associated bundle $z$  is
\begin{align}
z_n(t)&= \zeta(t-n\tau)e^{\lambda t + i n\phi}\label{eq:planewave}
\end{align} 
for a periodic profile $\zeta$ satisfying the MSE~(\ref{eq:v-master}) for the corresponding value of $\lambda$ and $\phi$ (see Theorem~1 in Supplementary material). Equation (\ref{eq:planewave}) presents a projection on the network modes and a discrete analog of the plane wave ansatz in continuous systems \cite{whitham2011linear}. 

\begin{figure}[t]
\includegraphics[width=\linewidth]{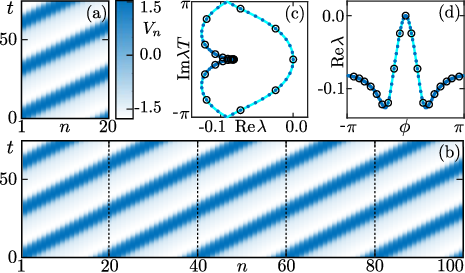}
\caption{\label{fig:1} Master stability curve (MSC) of DTWs in  system (\ref{eq:FHN}) of $N$ dissipatively-coupled FitzHugh-Nagumo oscillators $x_n=(V_n,W_n)$, $ V_n(t)$ and $W_n(t)\in \mathbb{R}$. (a)--(b): Space-time plots of DTWs with wave number $k=1/20$, (a): $N=20$ and (b): $N=100$. (c): Floquet exponents with Re$(\lambda)\geq-0.3$ of DTW (black circles) in (a), of DTW (blue dots) in (b), and MSC (cyan). (d):  FESs and dispersion relation, i.e. MSC in (c) shown in $(\phi, \mbox{Re}(\phi))$-projection.}
\end{figure}

The FE problem~(\ref{eq:x-lin-unscaled}) is effectively solved by considering the MSE (\ref{eq:v-master}) and treating $\phi$ as a continuous parameter.  This is exemplified in
Fig.~\ref{fig:1}(c), where we show the $N$ largest FEs (in real part) of the waves shown in (a) for $N=20$ (black circles) and (b) for $N=100$ (blue dots). Notice that both waves are stable and that FEs appear to delineate a curve suggesting an accumulation of FEs of DTWs as $N\to\infty$, a phenomenon previously observed in feed-forward rings of Duffing oscillators \cite{perlikowski2010routes}. Indeed, by continuously varying $\phi$, the solutions $(\phi,\lambda(\phi))$ of (\ref{eq:v-master}) trace a set of curves, the \emph{master stability curves} (MSCs), which contain the full FES for all network sizes $N=KN_0$, $K=1,2,\ldots$; see the MSC (cyan) in Fig.~\ref{fig:1}(c). This is an advantage of the MSC framework: Instead of computing the FESs of the DTWs with wave profile $y$ and wave number $k$ for networks with size $KN_0$, the MSCs contain all FESs for all sizes. Indeed, whenever $e^{i\phi}$ is a $KN_0$-th root of unity, $K=1,2,\ldots$, then the associated $\lambda(\phi)$ is a FE of the DTW in the network of size $KN_0$, and a solution of  (\ref{eq:x-lin-unscaled}).  Moreover, the spectrum in the limit $K\rightarrow \infty$ (approximating the lattice case $K=\infty$) densely fills the MSCs. Figure~\ref{fig:1}(d) shows the $(\phi,\mbox{Re}(\lambda(\phi)))$-projection of the MSC in panel~(c), where the equidistant spacing between FEs is easily observed. In analogy to continuous systems, we refer to this projection as (a branch of) the dispersion relation, relating a perturbation with wave number $\phi/2\pi$ to the corresponding growth rate.

\paragraph*{Computing MCSs by means of numerical continuation of a 2PBVP.}  
We use numerical methods for the continuation of periodic solutions to delay differential equations~\cite{Sieber2014}. Given the wave profile $y$ of a DTW $\gamma$, its period $T$, and wavenumber $k$, we require a solution triple $(\lambda,\zeta,\phi)$ of the MSE~(\ref{eq:v-master}) to initialize the 2PBVP. One such triple is $(\lambda,\zeta,\phi)=(0,y',0)$ where $y'$ can be computed using numerical differentiation. Fixing the norm and angle of $\zeta(0)$, pseudo-arclength continuation in $(\phi,\lambda)$ \cite{KrauskopfOsingaBook2007} defines a one-parameter family of solutions for (\ref{eq:v-master}), of which the projection onto the $(\mbox{Re}(\lambda),\mbox{Im}(\lambda))$-plane is an MSC. For more details on the numerical implementation see Supplementary material.

%that do not connect with the one starting from the Goldstein mode
We remark that the FES of a DTW might be comprised of several, disconnected MSCs. In general, we find the number of MSCs to be of the order of $d$.  Additional MSCs can be initialized by computing triples $(\lambda,\zeta,\phi)$ for a given perturbation wave number $\phi$ (see Proposition 1, Supplementary material).

\paragraph*{Applications of the master stability framework.}  Below, we illustrate how the master stability framework allows studying the stability of wave profiles independently of a specific embedding into a given network of size $N$.  It is then immediately interesting to see how the MSC changes for different wave profiles. To that aim, we introduce the profile equation corresponding to Eq.~(\ref{eq:x}) in a coordinate frame  co-moving with wave profile  $y$, i.e.
\begin{equation}
    y^\prime(t) = F(y(t)) + \!\!\!\!\sum_{1 \leq |m| \leq r}\!\!\!\! H_m(y(t),y(t-m\tau)), \label{eq:profile}
\end{equation}
which can be readily obtained by imposing (\ref{eq:y}) on system (\ref{eq:x}). Notice that $\tau$ can be considered a continuous parameter in Eq.~(\ref{eq:profile}). This allows applying continuation techniques to Eq.~(\ref{eq:profile}) to study how profile and period change as a one-dimensional solution family when $\tau$ (or a system parameter) is varied. The consistency relation implies a variation of $k$ according to $k=\tau/T$. When $k=p/q\in\mathbb{Q}$, the corresponding profile can be embedded in a suitable network of size $N=Kq$, for any $K\in \mathbb{N}$, by concatenating $Kp$ wave profiles and sampling with time window $p\tau$. For simplicity, we present examples for $p=1$. Below, we show how simultaneous numerical continuation of (\ref{eq:v-master}) and (\ref{eq:profile})  in $\tau$ (and therefore the wave number) can be used to explore \emph{(A)} wave destabilization and \emph{(B)} the presence of stable bound multi-pulses.

\emph{(A) Wave destabilization.} 
Starting with the DTWs shown in Fig. \ref{fig:1}, we increase $k$ from $k_1=1/20$ to $k_3=1/15$, see Fig.~\ref{fig:2}(a),  and notice a qualitative change of the MSC of the underlying wave profile as shown in Fig. \ref{fig:2}(b). Approximately at $k_2\approx 0.0617$, the MSC displays a curvature change (from negative to positive) around $\lambda=0$, causing a small curve segment to extend into the positive half-plane. The wave profiles for which the MSC has positive curvature at $\lambda=0$ (to the right of $k_2$) are colored red in (a).  The change in curvature is well pronounced for $k=k_3$, see (b). Also plotted is the FES of the corresponding DTW in a ring of $N=15$ (black circles) and $N=60$ (purple dots), which are both unstable. In the network with fewer nodes, this causes the wave to die out after some transient which is largely determined by the distance of the initial condition used for simulation from the exact unstable wave profile, see (c). In the larger network, we observe another phenomenon: multi-stability of DTWs. Indeed, (d) shows the two-step transition of a DTW starting near an unstable profile with $k=k_3$, to $k=k_1$, and settling on $k=k_4$. Both wave profiles with wave numbers $k_1$ and $k_4$ are stable, but the simulation in (d) missed the basin of stability of the DTW at $k=k_1$. 

\begin{figure}[t]
\includegraphics[width=\linewidth]{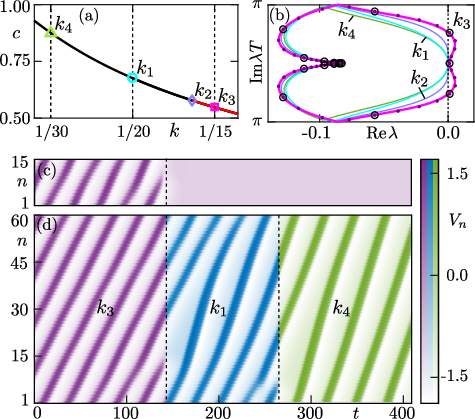}
\caption{\label{fig:2} Wave destabilization in system (\ref{eq:FHN}) of $N$ dissipatively-coupled FitzHugh-Nagumo oscillators $x_n=(V_n,W_n)$. (a): Numerical continuation of wave profile shown in Fig.~\ref{fig:1} parameterized by wave number $k$. The curve's color indicates the curvature of the MSC at $\lambda=0$; red (black) represents positive (negative) curvature. (b): MSCs of waves at values $k_1$ (cyan circle), $k_2$ (mauve diamond), $k_3$ (magenta square), and $k_4$ (green triangle) shown in (a). Also shown are the FEs of the wave profile with wave number $k=k_3$ for embedding dimension $N=15$ (black circles) and $N=60$ (purple dots). (c)--(d): Long-term dynamics of system (\ref{eq:FHN}) starting at unstable wave profile at $k_3$ for $N=15$ (c) and $N=60$ (d).}
\end{figure}

\begin{figure*}[t]
\includegraphics[width=\textwidth]{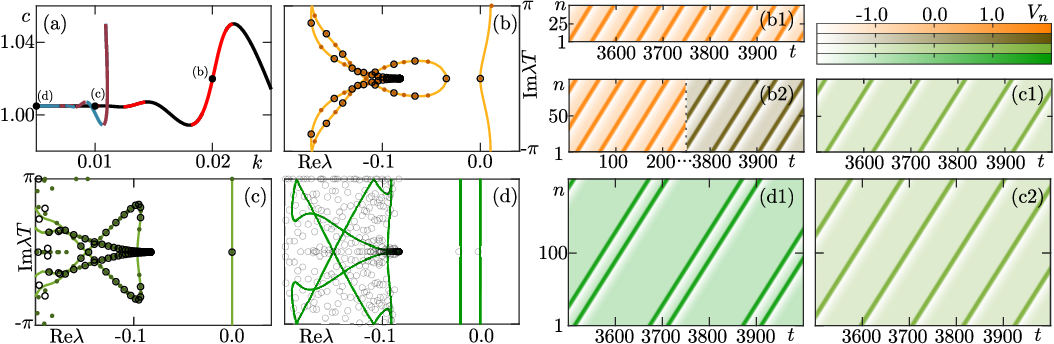}
\caption{\label{fig:3} Multi-stability and bound multi-pulses in system (\ref{eq:FHN}) of $N$ dissipatively-coupled FitzHugh-Nagumo oscillators $x_n=(V_n,W_n)$. (a): Numerical continuation of the wave profile shown in Fig.~\ref{fig:1} (main profile), and of the bound two-pulse wave profile (secondary profile). Colors represent the curvature of the MSCs attached to $\lambda=0$; namely, red (black) represents positive (negative) curvature for the main profile, while dark red (blue) represents positive (negative) curvature for the secondary profile. Positive curvature means that the curve extends to the right half-plane close to $\lambda=0$ indicating instability of the corresponding DTW for suffiently large embedding dimension $N$. (b): FES when $N=50$ (black circles) and $N=100$ (brown dots), together with MSCs (orange curves) of the main profile at $k=1/50$. (c): FES when $N=100$ (black circles) and $N=200$ (green dots), together with MSCs (light green curves) of the main profile at $k=1/100$. (d): FES (computed with same numerical tolerances as in (b) and (d)) when $N=200$ (gray circles) together with MSCs (dark green curve) of the secondary profile at $k=1/200$.   (b1)--(c1): Long-term dynamics of the main profile at points indicated in (a) for different embedding dimensions $N$. (d1): Long-term dynamics of the secondary profile as indicated in (a) in a network with $N=200$ nodes.}
\end{figure*}

%The colors represent the curvature of the MSC attached to $\lambda=0$ of the respective DTW; namely, black and blue (red and dark red) correspond to positive (negative) curvature of the corresponding DTW.

\emph{(B) Wave localization and bound two-pulses.} Continuing the wave profile shown in Fig. \ref{fig:1} to smaller wave numbers $k$, we observe an oscillatory limiting behavior towards the asymptotic wave speed $c\approx1.0049$ as $k$ approaches $0$, see Fig.~\ref{fig:3}(a). Simultaneously, the minimal embedding dimension of the profile grows beyond bound, the wave profile localizes in a small number of nodes (a pulse) and its period tends to infinity. This oscillatory behavior can be attributed to the presence of a homoclinic bifurcation, presumably of Shilnikov-type, in the profile equation at $k=0$, possibly leading to multi-pulses near $k=0$ \cite{homburg2010homoclinic}. 
Another mechanism for the bifurcation of bound multi-pulses from a single pulse in delay differential equations has been studied in \cite{giraldo2023pulse}.
In what follows, we briefly illustrate how the presence of this object manifests in system~(\ref{eq:FHN}) for $k=0.02,0.01$ and $k=0.05$, and minimal embedding dimension $N=50,100$ and $200$ by showing the existence of stable bound two-pulses for sufficiently small $k$. The results are contained in Fig.~\ref{fig:3}.

Figure~\ref{fig:3}(b) shows part of the FES (with largest real parts) of wave profiles at $k=1/50$ alongside their corresponding MSCs. Notably, the MSC has split into two curves as compared to Fig.~\ref{fig:1}(c) (counting both segments on the left as one curve continuously parameterized by $\phi$). As part of this transition (details not shown), the curvature of the rightmost MSC has changed from negative to positive locally at $\lambda=0$, indicating that the corresponding DTWs are unstable for sufficiently large embedding dimension $N$. Indeed, the right-most curve satisfies the following property: $\lambda=0$ whenever $e^{i \phi}$ is a $50$th-root of unity.  This implies that the DTW is stable in the network with $N=50$ nodes. However, embedding the same wave profile in a network with $N=100,$ we observe that the corresponding DTW is unstable with a FE $\lambda=\lambda_0$ of the form Re$(\lambda_0)>0,$ Im$(\lambda_0 T)=i\pi$ whenever $\phi$ and is an odd $100$-th root of unity. This is evidenced by the FES (black circles and orange dots) in (b) and direct simulations starting near the wave profile (b1)--(b2). 

In (b2), we observe a DTW with non-equidistant pulses (a bound two-pulse). Interestingly, this DTW cannot correspond to the wave profile at $k=1/100$ in (a), which is orbitally stable when $N=100$ as evidenced by the FES (c) and direct simulation (c1); rather, it can be attributed to a period-doubling bifurcation. The two-pulse solution obtained in (b2) can be used to reconstruct a wave profile corresponding to the secondary curve of bound two-pulse wave profiles in (a) (dark red and blue). Furthermore, we observe that the bound two-pulse wave profile approaches a limit as $k$ tends to zero; however, both profiles look different as they approach it. To illustrate this, we compare direct simulations in a $N=200$ network when simulating near the original wave profile with $k=1/100$ (c2) and the bound two-pulse wave profile with $k=1/200$ (d1). For the bound two-pulse wave profile, the DTW is stable in its minimal network embedding, evidenced by the FES shown in (d) and simulation. On the other hand, the original wave profile is unstable in the $N=200$ network, where its largest unstable FE is close to zero in the real part. Indeed, the MSC in (c) attached to $\lambda=0$ has become flat; this suggests a scaling of the real parts of FEs $\lambda$, namely Re$(\lambda)=\mathcal{O}(k)$ on the curve, as $k=1/(cT)\to 0$ for finite values of $c$.  This can result in long transients where the solution stays close to the profile, as shown in (c2). Panels (c) and (d) again emphasize the importance of our approach where classical methods for computing the FES begin to fail, yet all FEs must, in truth, lie on an MSC computed with our framework as a consequence of Theorem 1 (see Supplementary material).

\paragraph*{Concluding Remarks.}
The spectrum of traveling waves on finite networks (DTWs) with index-shift invariance is determined by the master stability equation and master stability curves (MSCs). These curves fully encapsulate the spectrum of DTWs and can be computed effectively using numerical path-following techniques for delay differential equations. Specifically, the original high-dimensional Floquet exponent problem is replaced by a low-dimensional boundary value problem with periodic boundary conditions and additional free parameters, in particular $\phi$. This significantly reduces the computation time of analyzing the stability of DTWs  and can also be applied in other contexts, see Supplementary material. While DTWs on a ring serve as primary examples, the method is readily applicable to systems with higher embedding dimension, such as discrete tori and beyond, as well as higher-order (beyond pairwise) interactions. 
Several extensions of this work (e.g., scaling of FEs, general equivariant networks, and time-delayed couplings) warrant future research, in particular, the relation of the MSC with a spectrum of DTWs on infinite lattices.

\paragraph*{Example.} Figures in this letter are computed for $N$ diffusively-coupled FitzHugh-Nagumo (FHN) oscillators 
 $x_n=(V_n,W_n)$, $1\leq n\leq N$ (mod $N$) in the excitable regime, governed by
\begin{equation}
\begin{aligned}
V_n'=& V_n-\frac{V_n^3}{3} -W_n + 0.2 + V_{n+1}-2V_n + V_{n-1}\\
W_n'=& 0.08(V_n+0.7-0.8 W_n),
\end{aligned} 
\label{eq:FHN}
\end{equation}
where $V_n(t),W_n(t)\in\mathbb{R}$. The FHN system has been used to describe phenomenologically the propagation of action potentials along nerve axons \cite{fitzhugh1961impulses,nagumo1962active, coombes2023neurodynamics,hupkes2013stability}.
%; see \cite{coombes2023neurodynamics} for a recent review on neuronal models. 
%The lattice case $N=\infty$ is amenable to rigorous stability analysis \cite{hupkes2013stability}.

\begin{acknowledgments}
\vspace{-0.1cm}
S. R. and A. G. contributed equally to the paper as main authors. S.R. was supported by UKRI Grant No. EP/Y027531/1. A.G. was supported by KIAS Individual Grant No. CG086102 at Korea Institute for Advanced Study. The authors thank Matthias Wolfrum and two anonymous referees for valuable comments on the preprint. The code and the data that support the findings of this study are openly available at the following URL/DOI:
\url{https://github.com/andrusgiraldo/RG_MasterStability}.
\end{acknowledgments}

%\bibliographystyle{apsrev4-1} 
%\bibliography{project}% Produces the bibliography via BibTeX.
%merlin.mbs apsrev4-1.bst 2010-07-25 4.21a (PWD, AO, DPC) hacked
%Control: key (0)
%Control: author (72) initials jnrlst
%Control: editor formatted (1) identically to author
%Control: production of article title (-1) disabled
%Control: page (0) single
%Control: year (1) truncated
%Control: production of eprint (0) enabled
%

\appendix

\onecolumngrid

\vspace{1cm}
{\Large \textbf{Supplementary material: Master stability for traveling waves on networks}}

\section{Master stability equation for traveling waves in $\mathbb{Z}_N$-equivariant networks: Derivation, Main Theorem and proof}\label{sec:derivation}

We denote the state of a system $x=(x_1,...,x_N)$, where each $x_n \in \mathbb{R}^d$ may correspond to the state of a node $n$ in a network of size $N\geq 2$.  
For a coupling radius $r \in\mathbb{N}$, consider a sufficiently regular vector-valued function 
\begin{equation*}
    \begin{aligned} 
       G:\prod^{2r+1}_{j=1} \mathbb{R}^d &\rightarrow \mathbb{R}^d,
       (a_1,...,a_{2r+1}) \mapsto G(a_1,...,a_{2r+1}),
    \end{aligned}     
\end{equation*}
 and the vector field $g$ where each $x_n'$ is given by
%(x_1,...,x_N)
\begin{align}
    x^\prime_n =& 
 g_n(x_1,...,x_N):= G(x_{n-r},...,x_{n-1},x_{n},x_{n+1},...,x_{n+r}), \tag{S1}\label{eqApp:ProblemGeneral}
\end{align}
where indices outside the range $1,\ldots,N$ are understood as modulo $N$ throughout the text, in particular, $x_0=x_N, x_{-1}=x_{N-1}, x_{-2}=x_{N-2}$, etc. Notice that the functions $g_n$ position the $n$-component of $x$ to be in the middle of the inputs of $G$.

System~(\ref{eqApp:ProblemGeneral}) includes a large class of possible configurations; for example, system~(1) (presented in the main text) is readily obtained by defining $G$ as $$G(a_1,a_2,...,a_{2r+1})=F(a_{r+1}) + \!\sum_{1 \leq m \leq r}\! (H_m(a_{r+1},a_{r+1+m}) +  H_{-m}(a_{r+1},a_{r+1-m})),$$where $F$ and $H_m$, $1\leq |m| \leq r$ are sufficiently regular vector-valued functions. More generally, other types of networks that might not be of the form of (\ref{eqApp:ProblemGeneral}) at first glance can be brought into such form.  For example, multi-chromatic networks with $L$-types of nodes (periodically repeating) and $NL$-many nodes in total can be put in this form by considering collections of $L$-nodes as a single unit. Moreover, for two-dimensional (or higher-dimensional) discrete tori with $L_1$ rows and $L_2$ columns, the same method can be applied by grouping columns or rows. For the special case when $L_1$ and $L_2$ are coprime, one can always find a permutation of the indices that allows us to recast it in form (\ref{eqApp:ProblemGeneral}). 

Let $\sigma$ be the permutation mapping that acts as a right shift on the state vector. That is, $\sigma(x) = (x_2,x_1)$ when $N=2$, and $\sigma(x) = (x_2,...,x_N,x_1)$ when $N>2$. By construction, system~(\ref{eqApp:ProblemGeneral}) is equivariant under the group generated by the mapping $\sigma$ ($\mathbb{Z}_N$-equivariant). Vector fields of this form allow for the existence of  discrete traveling waves (DTWs)\cite{golubitsky2012symmetry}; that is, they admit the existence of a periodic solution $\gamma$ such that $$\gamma(t)=(x_1(t),x_2(t),\ldots,x_N(t))=(y(t-\tau),y(t-2\tau),\ldots,y(t)),$$ for some periodic function $y$, called the wave profile, with minimal period $T$ and wave speed $1/\tau$. Since $y(t)=x_1(t+\tau)=x_2(t+2\tau)=...=x_N(t+N\tau)=x_N(t)$, $T$ and $\tau$ must be in a rational relation; that is, $\tau = kT$ for some $k\in \mathbb{Q}$. The number $k$ is commonly referred to as the wave number of the DTW $\gamma.$ 

It is straightforward to show that the profile $y$ satisfies  the differential equation
\begin{align}
    y^\prime(t) =& G(y(t+r\tau),\ldots,y(t+\tau),y(t),y(t-\tau), \ldots,y(t-r\tau)). \label{eqApp:profileGeneral} \tag{S2} 
\end{align}

Associated with the DTW $\gamma$, there exists a collection of numbers called the Floquet multipliers (FMs). A FM $\mu$ is a complex number with an associated nonzero solution $z=(z_1,z_2,...,z_N)$---its Floquet bundle--- of the variational problem
\begin{equation}\label{eqApp:VarProblem}
\begin{aligned}
    z^\prime_n(t) &= \sum_{0 \leq |m| \leq r} A_m(t-n\tau)z_{n+m}(t)\\
    z(0)&=\mu^{-1} z(T),
\end{aligned} \tag{S3} 
\end{equation}
where
\begin{align}
A_{m}(t)=& D_{r+m+1} G(y(t+r\tau),\ldots,y(t),\ldots,y(t-r\tau)).\nonumber
\end{align}
Here, $D_{q}$ represents the partial derivatives matrix of $G$ with respect to its $q$-th $d$-dimensional input vector $x_n$. By Floquet Theory, a DTW in system~(\ref{eqApp:ProblemGeneral}) has $Nd$ FMs (up to multiplicity) \cite{KuznetsovBook2004}. If all the FMs are strictly smaller than one in modulus, except for the trivial multiplier (which is always one), then $\gamma$ is orbitally stable.

Alternatively to the Floquet multiplier spectrum, it is often convenient to consider the Floquet exponent spectrum of values $\lambda=\mbox{Log}(\mu)/T$, where $\mbox{Log}$ is the fundamental branch of the complex logarithm. Notice that FEs $\lambda$ are uniquely defined up to addition by a multiple of $2\pi i$, and hence, it is convenient to restrict the following considerations to FE in a suitable strip of the complex plane, e.g. $\mbox{Im}(\lambda T)\in[-\pi,\pi).$ 

In the following, we are going to derive transformations that will allow us to prove our main theorem, Theorem~\ref{thm:MainSupp}. One can restate the Floquet multiplier problem~(\ref{eqApp:VarProblem}) into a periodic boundary value problem (PBVP) by rescaling system~(\ref{eqApp:VarProblem}) using the Floquet exponent  $\lambda=\mbox{Log}(\mu)/T$ of $\mu$.  More specifically, considering the rescaled variable $u(t)=R_0(z(t)):= e^{-\lambda t}z(t)$, then $u=(u_1,u_2,...,u_N)$ is a nonzero periodic solution of the system of equations 
\begin{equation}\label{eqApp:VarProblemScaled}  
    \begin{aligned} 
        u^\prime_n(t) &= -\lambda \mathbb{I}_d u_n(t) + \sum_{0 \leq |m| \leq r} A_m(t-n\tau)u_{n+m}(t),\\ 
        u(0) &= u(T).
    \end{aligned}\tag{S4}       
\end{equation}
Notice that the problem of the stability of DTWs reduces to find values of $\lambda \in \mathbb{C}$ where system~(\ref{eqApp:VarProblemScaled}) has nontrivial periodic solutions. 
Due to the relation between FMs and exponents, the stability of $\gamma$  can be read from the sign of the real part of exponents; that is, if the sign of the real part of all the Floquet exponents is negative, except for the trivial exponent (which is zero), then $\gamma$ is orbitally stable.

Primarily, we consider system~(\ref{eqApp:VarProblemScaled}) and introduce the component-wise time-shift transformation $R_1$, that sends the bundle $u$ to a time-shifted bundle $v=(v_1,\dots,v_N)$, i.e., $v(t)=R_1(u(t))$, that satisfies 
$$v_n(t):=u_n(t+n\tau), \quad \forall n\in \{1,\ldots,N\}$$
Particularly, computing the derivative of $v_n$ gives

\begin{equation*}%\label{eqApp:VarProblemScaled2}  
    \begin{aligned} 
        v^\prime_n(t) &= u'_n(t+n \tau) \\
          &= -\lambda \mathbb{I}_d v_n(t)+ \sum_{0 \leq |m| \leq r} A_m(t)u_{j+m}(t+n\tau). 
    \end{aligned}      
\end{equation*}
Since $v_{n+m}(t-m\tau)=u_{n+m}(t+n\tau)$, where each component of $v=(v_1, ..., v_N)$ satisfies the periodic boundary problem
\begin{equation}\label{eqApp:VarProblemScaled3}  
    \begin{aligned} 
        v^\prime_n(t) &= -\lambda \mathbb{I}_d v_n+ \sum_{0 \leq |m| \leq r} A_m(t)v_{n+m}(t-m\tau), \\
        v_n(0) &= v_n(T).
    \end{aligned}\tag{S5}      
\end{equation}
The transformation $R_1$ is a bijection between solutions of (\ref{eqApp:VarProblemScaled}) and (\ref{eqApp:VarProblemScaled3}). Notice that the matrices $A_m(t)$ do not depend on index $n$; and hence, we can rewrite system~(\ref{eqApp:VarProblemScaled3}) as
\begin{equation}\label{eqApp:VarProblemScaled4}  \tag{S6}
\begin{aligned} 
        v^\prime(t) &= -\lambda\mathbb{I}_{dN}v(t)
    + \sum_{0 \leq |m| \leq r} \left[C^{-m} \otimes A_m(t)\right] v(t-m\tau)\\
    v(0) &= v(T)
    \end{aligned}
\end{equation}
where $\otimes$ is the Kronecker product of matrices, and $C$ is the $N$-dimensional circulant (companion) matrix
\begin{equation*}
%\label{eq:C}    
C=\left(\begin{matrix}
    0 & 0 & \dots  & 0 & 1\\
    1 & 0 & \ldots & 0 & 0\\
    0 & 1 & \ldots  & 0 & 0\\
    \vdots & \vdots & \ddots  & \vdots & \vdots \\
    0 & 0 & \ldots  & 0 & 0\\
    0 &  0 & \ldots  & 1 & 0
\end{matrix}\right).
\end{equation*}

%% and $\phi_l=2\pi l /N$, such that for each FE $\lambda$, every $\hat v_l$ is a solution of the $d$-dimensional periodic boundary problem 
Given that $C$ is diagonalizable, then system~(\ref{eqApp:VarProblemScaled4}) can be block diagonalized. Hence, there is a linear bijection $R_2$ that sends the time-shifted bundle $v$ to a new bundle $\hat{v}(t)=R_2(v(t))$ that satisfies 
\begin{equation}\label{eqApp:VarProblemScaled5} \tag{S7} 
    \begin{aligned} 
        \hat{v}^\prime(t) &= -\lambda\mathbb{I}_{dN} \hat{v}(t)+ \sum_{0 \leq |m| \leq r} \left[D^{-m} \otimes A_m(t)\right] \hat{v}(t-m\tau)\\  
    \hat{v}(0) & = \hat{v}(T)
    \end{aligned} 
\end{equation}
where $D$ is a $N$-dimensional diagonal matrix with entries $e^{-2\pi i/N},e^{-2\pi i (2/N)},..., e^{-2\pi i (N-1)/N}$ and $e^{-2\pi i N/N}=1$, i.e., the eigenvalues of $C$.  More precisely, $R_2$ takes the form of the discrete  Fourier transform, where the components of $\hat{v}=(\hat{v}_1,\ldots, \hat{v}_N)$ are given by 
\begin{equation*}
    \hat v_l=\sum_{n=1}^N v_n e^{-2in \pi l /N},\quad \hat v_l(t) \in \mathbb{R}^d 
\end{equation*}
with inverse transformation $R_2^{-1}$ acting on each component of $v$ as
$$v_n=\frac{1}{N}\sum_{l=1}^N \hat v_l e^{2in \pi l /N},\quad v_n(t) \in \mathbb{R}^d$$
where $l=1,\ldots,N$. System~(\ref{eqApp:VarProblemScaled5}) is block diagonal and consists of $N$ decoupled (and thus independent) equations for the $d$-dimensional states $\hat v_l (t),$ where

\begin{equation}\label{eqApp:VarProblemScaled6} \tag{S8}
    \begin{aligned} 
        \hat{v}^\prime_l(t) &= -\lambda\mathbb{I}_d \hat{v}_l(t)
    + \sum_{0 \leq |m| \leq r} (e^{-2\pi i l/N})^{-m} A_m(t)\hat{v}_{l}(t-m\tau), \\
     \hat{v}_l(0) &= \hat{v}_l(T).
    \end{aligned}      
\end{equation}

By using  Eq.~(\ref{eqApp:VarProblemScaled6}) as a reference, we define the master stability equation (MSE) for traveling waves as 
\begin{equation}\label{eqApp:masterstability} \tag{S9}
    \begin{aligned} 
        \zeta^\prime(t) &= -\lambda\mathbb{I}_d \zeta(t) + \sum_{0 \leq |m| \leq r} e^{im\phi} A_m(t)\zeta(t-m\tau),\\
        \zeta(0) &= \zeta(T).
    \end{aligned}
\end{equation}
Note that every solution triple $(\lambda, \phi,\zeta)$ of MSE~(\ref{eqApp:masterstability}) with nonzero periodic solution $\zeta$ and $\phi=2\pi l /N$, for some $l=1,...,N$,  directly corresponds to a solution triple $(\lambda,l,v_l)$ of (\ref{eqApp:VarProblemScaled6}) by setting $l=\phi/(2\pi)$ and $v_l=\zeta$. Now that we have all the necessary ingredients, we can prove the following theorem.

\begin{theorem}\label{thm:MainSupp}
For a network of size $N$, let $\gamma$ be a DTW solution of system~(\ref{eqApp:ProblemGeneral})  with wave profile $y$, wave speed $1/\tau$, and period $T$. Then the following holds:
\begin{enumerate}[(i)]
\item For any triple $(\lambda,\phi,\zeta)=(\lambda^*,\phi^*,\zeta^*)$ that satisfies (\ref{eqApp:masterstability}) with nonzero periodic solution $\zeta^*$ and $\phi^*=2\pi l/N$, for some $l=1,2,...,N$, there exists a corresponding tuple $(\mu,z)=(e^{\lambda^* T},z^*)$ that satisfies the Floquet problem~(\ref{eqApp:VarProblem}) and each component of $z^*=(z^*_1,...,z^*_N)$  is given by
\begin{align*}
z^\ast_n(t)&= \zeta^*(t-n\tau)e^{\lambda^* t+i n\phi^*}, \quad\quad n=1,\dots,N.
\end{align*}
\item For any tuple $(\mu,z)=(\mu^*,z^*)$, $z^*=(z^*_1,...,z^*_N)$, that satisfies the Floquet problem~(\ref{eqApp:VarProblem}),  there exists a triple $(\lambda,\phi,\zeta)=(\mbox{Log}(\mu^*)/T,2\pi l^* / N,\zeta^*)$, for some $l^*=1,...,N$ and nonzero periodic solution $\zeta^*$, that satisfies  system~(\ref{eqApp:masterstability}).
\item More strongly, any tuple $(\mu,z)=(\mu^*,z^*)$, $z^*=(z^*_1,...,z^*_N)$, where $\mu^*$ is a simple Floquet multiplier, that satisfies the Floquet problem~(\ref{eqApp:VarProblem}) must be of the form:
\begin{align*}
z^\ast_n(t)&= \zeta^*(t-n\tau)e^{\lambda^* t+i n\phi^*}, \quad\quad n=1,\dots,N. 
\end{align*}
with $\lambda^*=\mbox{Log}(\mu^*)/T$, $\phi^*=2\pi l^*/N$ for a unique $l^*=1,2,...,N$, and the triple $(\lambda,\phi,\zeta)=(\lambda^*,\phi^*,\zeta^*)$ satisfying~Eq.~(\ref{eqApp:masterstability}). Moreover, 
\begin{equation*}
l^* =\frac{iN}{2\pi }\left(\frac{\tau}{T} {\rm{Log}} \left( \mu^* \right)-{\rm{Log}}\left( \frac{\langle z^*_1(\tau),z^*_N(0) \rangle}{||z^*_N(0)||^2}\right)\right).
\end{equation*}
\end{enumerate}
\end{theorem}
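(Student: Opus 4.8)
The plan is to obtain parts (i) and (ii) as near-immediate consequences of the chain of linear bijections $R_2\circ R_1\circ R_0$ already constructed in the derivation, which identifies the solution set of the Floquet problem~\eqref{eqApp:VarProblem} (for a fixed multiplier $\mu^*$, equivalently a fixed exponent $\lambda^*=\mathrm{Log}(\mu^*)/T$) with the solution set of the block-diagonal, decoupled system~\eqref{eqApp:VarProblemScaled6}, i.e.\ with the direct sum over $l=1,\dots,N$ of the solution spaces of the MSE~\eqref{eqApp:masterstability} at $\phi=2\pi l/N$. The whole theorem is then a matter of reading this identification in both directions and, for part (iii), counting dimensions.

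For part (i), given a nonzero periodic $\zeta^*$ solving~\eqref{eqApp:masterstability} at $\phi^*=2\pi l/N$, I would define $\hat v$ by setting $\hat v_l=\zeta^*$ and $\hat v_{l'}=0$ for $l'\neq l$; this is a nonzero solution of~\eqref{eqApp:VarProblemScaled5}. Applying $R_2^{-1}$, then $R_1^{-1}$ (i.e.\ $u_n(t)=v_n(t-n\tau)$), then $R_0^{-1}$ (i.e.\ $z_n(t)=e^{\lambda^* t}u_n(t)$), a direct substitution of the inverse Fourier transform collapses the single mode to $z^*_n(t)=N^{-1}\zeta^*(t-n\tau)e^{\lambda^* t+in\phi^*}$, which after rescaling is the claimed bundle; the boundary condition $z(0)=\mu^{-1}z(T)$ with $\mu=e^{\lambda^*T}$ follows from $T$-periodicity of $\zeta^*$. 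Part (ii) is the same argument run forward: push a given nonzero Floquet bundle $z^*$ through $R_0,R_1,R_2$ to a nonzero $\hat v$, and since~\eqref{eqApp:VarProblemScaled6} is decoupled, at least one component $\hat v_{l^*}\neq 0$ supplies the required periodic solution $\zeta^*$ of~\eqref{eqApp:masterstability} at $\phi^*=2\pi l^*/N$.

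For part (iii) I would exploit that a \emph{simple} multiplier $\mu^*$ has one-dimensional (generalized) eigenspace, so the solution space of~\eqref{eqApp:VarProblem} for $\mu^*$ is one-dimensional. Because $R_2\circ R_1\circ R_0$ is a linear isomorphism onto the direct sum over $l$ of the solution spaces of the MSE at $\phi=2\pi l/N$, exactly one summand can be nonzero; this yields simultaneously the uniqueness of $l^*$ and the fact that $z^*$ must have the stated plane-wave form. The explicit formula for $l^*$ I would extract by evaluating the ansatz at two sample points: $z^*_1(\tau)=\zeta^*(0)e^{\lambda^*\tau+i\phi^*}$, and, using the consistency relation $N\tau=pT$ with $p\in\mathbb{Z}$ together with $e^{iN\phi^*}=1$ and $T$-periodicity of $\zeta^*$, $z^*_N(0)=\zeta^*(0)$. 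Hence $z^*_1(\tau)=e^{\lambda^*\tau+i\phi^*}z^*_N(0)$, so the Hermitian ratio $\langle z^*_1(\tau),z^*_N(0)\rangle/\|z^*_N(0)\|^2$ equals the scalar $e^{\lambda^*\tau+i\phi^*}$; taking $\mathrm{Log}$, substituting $\lambda^*\tau=(\tau/T)\mathrm{Log}(\mu^*)$ and $\phi^*=2\pi l^*/N$, and solving for $l^*$ reproduces the displayed expression.

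I expect the main obstacle to lie in part (iii) rather than in the bookkeeping: one must argue carefully that the linear bijections respect multiplicity, so that algebraic simplicity of $\mu^*$ genuinely forces a single nonzero Fourier mode, which is delicate because~\eqref{eqApp:VarProblemScaled6} is a mixed-type delay system whose monodromy acts on an infinite-dimensional space even though the exponent spectrum is discrete. A secondary nuisance is the branch of the complex logarithm: the formula recovers $l^*$ only modulo $N$ (consistent with $\phi^*$ determining $l^*\bmod N$), and one must also flag the mild nondegeneracy assumption $\zeta^*(0)=z^*_N(0)\neq 0$ implicit in dividing by $\|z^*_N(0)\|^2$.
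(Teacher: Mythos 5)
Your proposal is correct and follows essentially the same route as the paper: parts (i) and (ii) read the chain of bijections $R_2\circ R_1\circ R_0$ in the two directions exactly as the paper does, and part (iii) uses simplicity of $\mu^*$ (one-dimensionality of the Floquet eigenspace) together with the block-diagonal structure to force a single nonzero Fourier mode, which is the same mechanism the paper invokes when it writes $z^*=c\tilde z$ for the plane-wave bundle $\tilde z$ built via (i), and your evaluation of $z_1^*(\tau)$ against $z_N^*(0)$ to extract $l^*$ matches the paper's formula derivation. Your closing caveats (recovery of $l^*$ only modulo $N$ due to the branch of $\mathrm{Log}$, and the implicit assumption $z_N^*(0)\neq 0$) are in fact points the paper glosses over, so they strengthen rather than weaken the argument.
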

\begin{proof}
\begin{enumerate}[(i)]
\item Let $(\lambda,\phi,\zeta)=(\lambda^*,\phi^*,\zeta^*)$ be a triple satisfying system~(\ref{eqApp:masterstability}) with nonzero $\zeta^*$ and $\phi^*=2\pi l^* /N$, for some $l^*=1,...,N$.  Then we can construct a periodic solution $\hat{v}^*=(\hat{v}^*_1,\ldots,\hat{v}^*_N)$ for system~(\ref{eqApp:VarProblemScaled5}) as follows
$$\hat{v}^*_{l^*}(t)=N \zeta^*(t), \text{ and }\hat{v}^*_l(t)=0 \text{ for $l \neq l^*$ }.$$
Since $\hat{v}^*$ is nonzero periodic solution of system~(\ref{eqApp:VarProblemScaled5}) by construction, and $R_0$, $R_1$ and $R_2$ are linear bijections, the function 
$$z^*(t)=R_0^{-1}R_1^{-1}R_2^{-1}\hat{v}^*(t)=(\zeta^\ast(t-\tau)e^{\lambda^* t+i \phi},\zeta^\ast(t-2\tau)e^{\lambda^* t+2i\phi},\ldots,\zeta^\ast(t-(N-1)\tau)e^{\lambda^* t+i (N-1)\phi},\zeta^\ast(t)e^{\lambda^* t}),$$ 
is a nonzero periodic solution of system~(\ref{eqApp:VarProblemScaled}) for $\mu=e^{\lambda^* T}$. Hence, $(\mu^*,z^*)$ is a Floquet tuple that satisfies the Floquet problem~(\ref{eqApp:VarProblem}), and the assertion follows.
\item Let $(\mu^*,z^*)$ be a tuple that satisfies the Floquet problem~(\ref{eqApp:VarProblem}) with nonzero $z^*$,  then $\hat{v}^*(t)=R_2R_1R_0 z^*(t)$ is a nonzero periodic solution with minimal period $T$ that satisfies system~(\ref{eqApp:VarProblemScaled6}) for $\lambda=\mbox{Log}(\mu)/T$. Hence, there exist at least one index $l_1$ such that a component $\hat{v}^\ast_{l_1}$ of $\hat{v}$ is nonzero.  Then the triple $(\lambda,\phi,\zeta)=(\mbox{Log}(\mu^*)/T,2\pi l_1 / N,\hat{v}^\ast_{l_1})$  satisfies system~(\ref{eqApp:masterstability}) and the assertion follows.
\item To prove (iii), we continue from the end of (ii); that is, from a tuple $(\mu,z)=(\mu^*,z^*)$ we construct a triple $(\lambda,\phi,\zeta)=(\mbox{Log}(\mu^*)/T,2\pi l_1 / N,\hat{v}^\ast_{l_1})$ that satisfies system~(\ref{eqApp:masterstability}). By (i), there exists a bundle $\tilde{z}=(\tilde{z}_1,...,\tilde{z}_N)$ such that
\begin{align*}
\tilde{z}_n(t)&= \zeta(t-n\tau)e^{\lambda t+in \phi}=\hat{v}^\ast_{l_1}(t-n\tau)e^{t {\scriptstyle \rm{Log}}(\mu)/T +in 2\pi l_1/ N}, \quad\quad n=1,\dots,N.
\end{align*}
and $(\mu^*,\tilde{z})$ satisfies (\ref{eqApp:VarProblem}). 
Simplicity of $\mu^*$ implies uniqueness of the associated Floquet eigenvector $z^*(0)$ up to scalar multiple $k \in \mathbb{C}\setminus\{0\}$ and, thus, $z^*(0)=c \tilde{z}(0)$. By using the uniqueness of solutions and linearity of~(\ref{eqApp:VarProblem}), we have $z^*= c \tilde{z}$.

Let $\tilde{\zeta}(t) = c \hat{v}^\ast_{l_1}(t)$, then the components of $z^*=(z^*_1,...,z^*_N)$ are of the form
\begin{align*}
z^*_n(t)&= \tilde{\zeta}(t-n\tau)e^{t{\scriptstyle\rm{Log}}(\mu)/T +i  n 2\pi l_1/ N}, \quad\quad n=1,\dots,N.
\end{align*}
By linearity of (\ref{eqApp:masterstability}), if $(\lambda,\phi,\zeta)=(\mbox{Log}(\mu^*)/T,2\pi l_1 / N,\hat{v}^\ast_{l_1})$ satisfies (\ref{eqApp:masterstability}) then $(\lambda,\phi,\zeta)=(\mbox{Log}(\mu^*)/T,2\pi l_1 / N,c\hat{v}^\ast_{l_1})=(\mbox{Log}(\mu^*)/T,2\pi l_1 / N,\tilde{\zeta})$ is also a triple satisfying (\ref{eqApp:masterstability}). Given that $z^*$ can be written as above, then $R_2R_1R_0 z(t)$ satisfies that all blocks of (\ref{eqApp:VarProblemScaled6}) are zero except for one (the component $l_1$); thus, every $(\mu^*,z^*)$ is associated to a unique $l_1$. Moreover, $l_1$ can be computed as 
$$l_1=\frac{iN}{2\pi }\left(\frac{\tau}{T}\text{Log}\left( \mu^* \right)-\text{Log}\left( \frac{\langle z^*_1(\tau),z^*_N(0) \rangle}{||z^*_N(0)||^2}\right)\right),$$ 
from the tuple $(\mu,z^*)$. Hence, the assertion follows.
\end{enumerate}
\end{proof}

Intuitively, Theorem~\ref{thm:MainSupp} can be understood as follows:
\begin{itemize}
\item Part (i) states that by finding solutions triples $(\lambda, \phi,\zeta)$ of the MSE~(\ref{eqApp:masterstability}), where $\zeta$ is a non-zero periodic solution and $\phi$ is of the form $2\pi l/N$; we have effectively found a Floquet exponent $\lambda$ of the traveling wave $\gamma$.
\item Part (ii) states that any Floquet multiplier $\mu$ with bundle $z$ is mapped through its associated Floquet Exponent $\lambda$ to a solution triple of (\ref{eqApp:masterstability}). That is, the MSE~(\ref{eqApp:masterstability}) contains the information of all the Floquet exponents of the original problem.
\item Part (iii) shows an exact correspondence between the solutions of the MSE and the Floquet bundle of the original problem. Furthermore, we can, therefore, think of each simple Floquet exponent $\lambda$ as being directly associated with exactly one $\phi$, with $\text{Im}(\phi)\in [0,2\pi)$.
\end{itemize}

As outlined in the main manuscript, Theorem~(\ref{thm:MainSupp}) has direct implications for the spectrum of all DTWs with profile $y$ that can be embedded in networks of different sizes that satisfy Eq.~(\ref{eqApp:ProblemGeneral}). More precisely, let $y$ be wave profile with period $T$, wave speed of $1/\tau$, and wave number $k=M_0/N_0$, where $M_0,N_0 \in \mathbb{N}$ and $\gcd(M_0, N_0)=1$, that satisfies Eq.~(\ref{eqApp:profileGeneral}).  In this case, one can always construct a DTW $\gamma_{KN_0}$ for a network with size $KN_0$, $K\in\mathbb{N}$, by imposing 
$$\gamma_{KN_0}(t)=(x_1(t),x_2(t),\ldots,x_{KN_0}(t))=(y(t-\tau),y(t-2\tau),
\ldots,y(t-(KN_0-1)\tau),y(t)),$$ 
which effectively corresponds to concatenating $K$ instances of the periodic profile $y$. Indeed, $\gamma_{KN_0}$ is a DTW of system~(\ref{eqApp:profileGeneral}) for size $KN_0$, and neither $y$, $\tau$, $T$ or $k$ has changed. Furthermore, by construction, all $\gamma_{KN_0}$ give rise to the same MSE~(\ref{eqApp:masterstability}); the only distinction lies in the corresponding phases $\phi$ that must be considered to compute the respective Floquet exponents. Thus, the MSE contains all the Floquet spectrum information of the DTWs $\gamma_{KN_0}$  for all network sizes $K N_0$, allowing us to view the MSE as an intrinsic property associated with $y$ and its characteristics, rather than being dependent on the network size.

Finally, we also bring to attention the fact that even though the expression of the MSE~(\ref{eqApp:masterstability})  resembles the Floquet exponent problem of a delay differential equation \cite{sieber2011characteristic} at first glance, they are not equivalent. Note that Eq.~(\ref{eqApp:masterstability}) does not contain terms $e^{-\lambda\tau}$ and as such does not correspond to the periodic boundary value FP of a delay differential equation. As a result, the number of linearly independent solutions can be finite, namely, $Nd$ (counting multiplicity) in this case. 
The statement follows directly from Theorem 1 and the fact that the delay parameter $\tau$ in the master stability equation is in a rational relationship with the period, i.e. $N\tau = M T,$ where $k=M/N$. The expression $$z_n(t)=\zeta(t-n\tau)e^{\lambda t -in\phi}$$ in Theorem 1 (i), then allows to embed any solution $\zeta$ of (\ref{eqApp:masterstability}) into a linear $Nd$-dimensional system of ordinary differential equations and, thus, the assertion follows. This is in sharp contrast to the variational problem for delay differential equations where terms $e^{\lambda\tau}$ lead to infinitely many solutions \cite{sieber2011characteristic}.

\section{Numerical Implementation and Continuation of Master Stability Curves}\label{app:2PBVP}
A classical method to compute the Floquet (multiplier) spectrum is to compute the eigenvalue spectrum of the so-called monodromy matrix. Let $\Xi(t)$ be the fundamental matrix solution of the nonautonomous linear ODE (\ref{eqApp:VarProblem}), and denote $Z(t,z(0))=\Xi(t)z(0)$ the solution of (\ref{eqApp:VarProblem}) with initial condition $z(0)$. The matrix $M=\Xi(T)$ ($T$ being the temporal period of $\gamma$) is called the \emph{monodromy matrix}. The Floquet boundary conditions imply that $Mz(0)=z(T)=\mu z(0),$ i.e., FMs of (\ref{eqApp:VarProblem}) correspond to the eigenvalues of $M$. The columns of $M$ can be constructed as the image of the canonical basis eigenvectors $e_j\in\mathbb{R}^{Nd},$ $j=1,\ldots,Nd$ where $(e_j)_l=\delta_{jl}$, that is $M=[Z(T,e_1)|\ldots|Z(T,e_{Nd})]$. As a result, computing the monodromy matrix (and its eigenvalues) is typically expensive (and it could be inaccurate) when $N$ is large. 
%In the following, we showcase an alternative that bypasses the network size issue.

\subsection{Two-point boundary value problem for Master stability curves}
The master stability curves (MSCs), showcased in the main manuscript, circumvent the network size issue for computing the Floquet exponent spectrum by leveraging the MSE~(\ref{eqApp:masterstability}) and Theorem~\ref{thm:MainSupp}. As discussed in the main manuscript, the MSCs are one-dimensional curves in the $(\mbox{Re}(\lambda),\mbox{Im}(\lambda))$-plane, which contain the  Floquet exponent spectrum of a DTW. To compute the MSCs, numerical continuation of a suitable two-point boundary value problem \cite{KrauskopfOsingaBook2007} is employed by using $\phi$ as a parameter. More precisely, the two-point boundary value problem (2PBVP) takes the form
\begin{align}
    y^\prime(t) =& TG\left(y\left(t+r\frac{\tau}{T}\right),\ldots,y\left(t+\frac{\tau}{T}\right),y(t),y\left(t-\frac{\tau}{T}\right),\ldots,y\left(t-r\frac{\tau}{T}\right)\right), \tag{S10} \label{eq:BVP1}\\ \zeta^\prime(t) =& T\!\left(\!-\lambda\mathbb{I}_d \zeta(t) + \!\sum_{m=0}^{N-1} e^{-im\phi}  A_m(t)\zeta\left(t-m\frac{\tau}{T}\right)\!\right)\!, \tag{S11}\label{eq:BVP2} \\
    y(0) =& y(1), \tag{S12}\label{eq:BVP3} \\
    \zeta(0) =& \zeta(1), \tag{S13}\label{eq:BVP4}\\
    0 =& \int^1_0 \langle y, y^\prime_{ref}\rangle + \langle \zeta, \zeta^\prime_{ref} \rangle dt, \tag{S14}\label{eq:BVP5}\\
    1 =& T ||\zeta||^2_{L_2}, \tag{S15}\label{eq:BVP6}\\
    0 =& \text{Im}(\pi_1(\zeta(0))). \tag{S16}\label{eq:BVP7}
\end{align}
Here, equations~(\ref{eq:BVP1}) and Eq.~(\ref{eq:BVP2}) are respectively rescaled versions of the (\ref{eqApp:profileGeneral}) and the MSE~(\ref{eqApp:VarProblem}) in time, such that the period $T$ appears explicitly as a parameter in the formulation. In this way, time is rescaled to the interval $[0,1]$ such that the wave profile $y$ has period one; hence, the periodic conditions~(\ref{eq:BVP3}) and (\ref{eq:BVP4}). Condition~(\ref{eq:BVP5}) is an integral phase condition that ensures that translations in solution's time along the periodic solution do not occur during continuation \cite{KrauskopfOsingaBook2007}, where $y_{ref}$ and $\zeta_{ref}$ are the profile and  $\zeta$ from a previous iteration of the continuation scheme.  Since the family of solutions $c \zeta$, $c \in \mathbb{C}$, also satisfy the MSE~(\ref{eqApp:masterstability}), the boundary conditions~(\ref{eq:BVP6}) and (\ref{eq:BVP7}) fixes the solution of the MSE~(\ref{eqApp:masterstability}) to be only one of these; more precisely, the $\zeta$ that satisfies conditions~(\ref{eq:BVP6}) and (\ref{eq:BVP7}) has norm one and $\text{Im}(\pi_1(0))=0$, where $\pi_1$ is a projection onto the first component of $\zeta$ (it fixes the angle of the first component to be zero).  In this way, the set of conditions (\ref{eq:BVP1})--(\ref{eq:BVP7}) generically define a 2PBVP for fixed parameters. 

Numerically, one discretizes $y$ and $\zeta$ in time, such that conditions(\ref{eq:BVP1})--(\ref{eq:BVP7}) define a zero-problem in a suitable finite dimensional space. Then, one can use pseudo-arclength continuation to compute the MSCs presented in Figs.~1--3 in the main manuscript. To achieve this, we make use of the software package \textsc{DDE-Biftool} for Matlab/Octave \cite{Sieber2014},  where the discretization and pseudo-arclength continuation of suitable 2PBVP for delay-advance differential equations is already implemented.

To initialize the continuation scheme, one needs to start from a numerical solution $(y,\zeta,T,\tau,\lambda,\phi):=(y_0,\zeta_0,T_0,\tau_0,\lambda_0,\phi_0)$ that satisfies the 2PBVP. This initial point can be obtained by direct integration or other continuation schemes,  see Subsection~\ref{subSec:Init} for more details. Given the number of conditions in (\ref{eq:BVP1})--(\ref{eq:BVP7}), then four parameters need to be freed up (problem dimension - number of conditions + 1) to compute the stability curves. Since $\lambda \in \mathbb{C}$, then its real and imaginary parts are treated as two separate parameters; hence, one lets $(\phi, \lambda, T)$ be the parameters to be continued. As $(y,\zeta,T,\lambda,\phi)$ varies, we keep track of $\lambda$ until we reach $\lambda_0$ again or $\lambda_0+2i\pi q /T$, for some $q \in \mathbb{Z}$, and then terminate the continuation scheme. After completing the continuation, the family of solutions $(y,\zeta,T,\lambda,\phi)$ is then projected to the $(\mbox{Re}(\lambda),\mbox{imag}(\lambda))$-plane to obtain the MSC. 

We remark that during continuation, we allow $\text{Im}(\lambda)$ to vary beyond the range $[-\pi/T, \pi/T)$ since ---given a solution triple $(\tilde{\lambda},\tilde{\phi},\tilde{\zeta})$ of  (\ref{eqApp:masterstability})--- another solution triple $(\hat{\lambda},\hat{\phi},\hat{\zeta})$ can be obtained via the transformations 
$$(\tilde{\lambda},\tilde{\phi},\tilde{\zeta} \mapsto (\hat{\lambda},\hat{\phi},\hat{\zeta}) = \left(\tilde{\lambda}+\frac{2\pi i q}{T} ,\tilde{\phi}+2\pi q k,\hat{\zeta}\right), \qquad \text{for any } q\in \mathbb{Z},$$
where $\hat{\zeta}(t)=\tilde{\zeta}(t)e^{2\pi i q t/T}$ and $k\in\mathbb{Q}$ is the wave number. Thanks to this transformations, we can always map back triples $(\tilde{\lambda},\tilde{\phi},\tilde{\zeta})$ with $\text{Im}(\tilde{\lambda}) \notin[-\pi/T, \pi/T)$ to triples $(\hat{\lambda},\hat{\phi},\hat{\zeta})$ with $\text{Im}(\hat{\lambda})\in[-\pi/T, \pi/T)$. 

\subsection{Initialization of the two-point boundary value problem}\label{subSec:Init}

In the following, we outline procedures to obtain starting points $(y_0,\zeta_0,T_0,\tau_0,\lambda_0,\phi_0)$ for initialization the continuation subroutine of the 2PBVP~(\ref{eq:BVP1})--(\ref{eq:BVP7}).  To obtain the starting solution for the MSC emanating from the trivial Floquet exponent $\lambda=0$, we employ the following strategy: Given a DTW $\gamma$ obtained by direct simulation, or continuation of a bifurcating periodic solution in system~(\ref{eqApp:ProblemGeneral}), one can readily determine the wave profile $y_0$, its period $T_0$ and wave velocity $1/\tau_0$. Moreover, the solution triple $(\lambda_0,\zeta_0,\phi_0)$ can be chosen as $(\lambda,\zeta,\phi)=(0,y',0)$ where $y'$ can be computed using numerical differentiation and normalized. The thus constructed solution $(y_0,\zeta_0,T_0,\tau_0,\lambda_0,\phi_0)$ effectively corresponds to the solution generated by solving the 2PBVP~(\ref{eq:BVP1})--(\ref{eq:BVP7}) for $\lambda=0$, and can be used to start the continuation routine.

To compute the initial solutions for other MSCs, one can always make use of the monodromy matrix $M=\Xi(T)$ (recall that $\Xi(t)$ is the fundamental matrix solution of (\ref{eqApp:VarProblem})) to compute tuples $(\mu,z)$ that could be used to construct the starting point by using Theorem~\ref{thm:MainSupp}. This approach is suitable for small networks (e.g. in a minimal network realization of a given wave profile) and one is interested in the stability of higher-order embeddings of the profile. Furthermore, these starting solutions can be extended to larger networks with different wavenumber $k$ by continuation, hence saving the cost of recomputing the monodromy matrix, see Sec.~\ref{sec:OtherApp}.

For large networks, however, it is generally more convenient to use the fact that Floquet exponent and bundle pairs $(\mu,z)$, with $\mu=e^{\lambda T}$, can be associated with a given perturbation wave number $\phi/(2\pi).$ Specifically, Theorem 1(iii) implies that generically the bundle $z$ has spatio-temporal symmetry $z_{n+1}(t+\tau)=e^{\lambda \tau + i\phi}z_n(t)$, for all $n=1,...,N-1$ and $z_1(t+\tau)=e^{\lambda \tau +i\phi} z_N(t)$. Thus, $z$ is 
also a solution to the twisted boundary value problem 
\begin{equation}\label{eqApp:VarProblem-twisted}
\begin{aligned}
    z^\prime_n(t) &= \sum_{0 \leq |m| \leq r} A_m(t-n\tau)z_{n+m}(t)\\
    z_n(t) &=e^{-\lambda \tau - i\phi}z_{n+1}(t+\tau),
\end{aligned} \tag{S17} 
\end{equation}
where indices are again understood as modulo $N.$ Without the boundary conditions, Eq.~(\ref{eqApp:VarProblem}) and Eq.~(\ref{eqApp:VarProblem-twisted}) defined the same initial value problem (they are identical), thus they share the same fundamental matrix solution $\Xi$. The (twisted) monodromy matrix $\Xi(\tau)$ \cite{deWolff2023equivariant}, associated to Eq.~(\ref{eqApp:VarProblem-twisted}), can be treated with more efficiently than $M$ by choosing a suitable basis representation. Specifically, by restricting the action of $\Xi(\tau)$ to the eigendirections corresponding to a specific perturbation with associated wave number $\phi/2\pi,$ we can compute $\lambda$ as an eigenvalue of a $d\times d$-dimensional matrix (the reduced monodromy matrix) $$R(\phi):=V(\phi)^H\Xi(\tau)V(\phi)\in \mathbb{C}^{d\times d},$$  where $$V(\phi):=\left((e^{i\phi},e^{i2\phi},\ldots,1)\otimes \mbox{Id}_d\right)/\sqrt{N} \in \mathbb{C}^{Nd\times d},$$ the symbol $\otimes$ denotes the Kronecker product, and $V(\phi)^H$ is the Hermitian (or conjugate) transpose of $V(\phi)$. The following proposition makes this statement rigorous.

\begin{preposition}[Floquets exponents corresponding to given wave number]\label{prep:PlaneWave} 
For a network of size $N$, let $\gamma$ be a DTW solution of system~(\ref{eqApp:ProblemGeneral})  with wave profile $y$, wave speed $1/\tau$, wave number $k$, and period $T$ (as in Theorem 1). The following statements hold:
\begin{enumerate}[(i)]
\item Let $\mu^*=e^{\lambda^\ast T}$ (where $\lambda^\ast$ is defined as $\lambda = \text{Log} (\mu)/T$) be a simple Floquet multiplier of $\gamma$ with associated bundle $z^*$ and perturbation wave number $\phi^\ast/(2\pi)$ ($\phi^\ast$ defined as in Theorem 1(iii)). Then $(\mu^\ast)^k=e^{\lambda^\ast \tau}$ is a eigenvalue of the reduced monodromy matrix $R(\phi^\ast)$ with associated eigenvector $\alpha=V(\phi^\ast)^Hz^\ast(0).$

\item Let $\phi=2\pi l/N$, for some $l=1,...N$, and let $b=e^{\lambda \tau}$ (where $\lambda$ is defined as $\lambda = \text{Log} (b)/\tau$) be an eigenvalue with associated eigenvector $\beta$ of the reduced monodromy matrix
$R(\phi).$ Then $\mu=e^{\lambda T}$ is a Floquet multiplier of $\gamma$ with associated bundle $z(t)=\Xi(t)V(\phi)\beta$ and perturbation wave number $\phi/(2\pi)$.
\end{enumerate}
\end{preposition}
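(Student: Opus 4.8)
The plan is to read Proposition~\ref{prep:PlaneWave} as a spectral correspondence between the Floquet data of $\gamma$ and the $\phi$-labelled eigenvalues of the \emph{twisted monodromy}, shuttling between the two with Theorem~\ref{thm:MainSupp}. Write $\sigma$ for the block cyclic shift on $\mathbb{C}^{Nd}$, $(\sigma z)_n=z_{n+1}$; it is unitary and satisfies $\sigma V(\phi)=e^{i\phi}V(\phi)$, so the columns of $V(\phi)$ span its $e^{i\phi}$-eigenspace and $V(\phi)^H\sigma=e^{i\phi}V(\phi)^H$. The structural input is the $\mathbb{Z}_N$-equivariance of the variational equation~(\ref{eqApp:VarProblem}): writing $\mathcal{A}(t)$ for its $Nd\times Nd$ coefficient matrix (block $(n,j)$ equal to $A_{j-n}(t-n\tau)$), the wave structure together with $T$-periodicity and the commensurability $N\tau=qT$ (with $q=Nk\in\mathbb{N}$) yields $\mathcal{A}(t+\tau)=\sigma^{-1}\mathcal{A}(t)\sigma$. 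Denoting by $\Phi(t,s)$ the state-transition matrix (so $\Xi(t)=\Phi(t,0)$ and $M=\Xi(T)$), this integrates to the covariance relation $\Phi(t+\tau,s+\tau)=\sigma^{-1}\Phi(t,s)\sigma$, the single identity driving the whole argument.

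For part (i) I would start from a simple Floquet multiplier $\mu^\ast$ with bundle $z^\ast$ and invoke Theorem~\ref{thm:MainSupp}(iii) to obtain both the plane-wave form $z^\ast_n(t)=\zeta^\ast(t-n\tau)e^{\lambda^\ast t+in\phi^\ast}$ and the uniqueness of $\phi^\ast$. This form is precisely the spatio-temporal symmetry $z^\ast_{n+1}(t+\tau)=e^{\lambda^\ast\tau+i\phi^\ast}z^\ast_n(t)$, i.e.\ $z^\ast$ solves the twisted problem~(\ref{eqApp:VarProblem-twisted}). Evaluating at $t=0$ and combining $z^\ast(\tau)=\Xi(\tau)z^\ast(0)$ with the twisted boundary condition gives the twisted-monodromy eigenrelation $\sigma\Xi(\tau)z^\ast(0)=e^{\lambda^\ast\tau+i\phi^\ast}z^\ast(0)$, in which $e^{\lambda^\ast\tau}=(\mu^\ast)^k$ because $\tau=kT$. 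Applying $V(\phi^\ast)^H$ and using $V(\phi^\ast)^H\sigma=e^{i\phi^\ast}V(\phi^\ast)^H$ cancels the factor $e^{i\phi^\ast}$ and yields $V(\phi^\ast)^H\Xi(\tau)z^\ast(0)=(\mu^\ast)^k\alpha$ with $\alpha=V(\phi^\ast)^Hz^\ast(0)$. The remaining step is to identify the left-hand side with $R(\phi^\ast)\alpha=V(\phi^\ast)^H\Xi(\tau)V(\phi^\ast)\alpha$, i.e.\ to show that the isotypic content of $z^\ast(0)$ orthogonal to the range of $V(\phi^\ast)$ is annihilated by $V(\phi^\ast)^H\Xi(\tau)$; this is exactly where the covariance relation must be exploited.

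Part (ii) runs the same chain backwards. Given $\beta$ with $R(\phi)\beta=b\beta$ and $b=e^{\lambda\tau}$, I would set $z(t)=\Xi(t)V(\phi)\beta$, which automatically solves the variational ODE with $z(0)=V(\phi)\beta$ in the $e^{i\phi}$-eigenspace of $\sigma$. Projecting and using $V(\phi)^H\sigma=e^{i\phi}V(\phi)^H$ promotes $R(\phi)\beta=b\beta$ to $\sigma\Xi(\tau)z(0)=e^{\lambda\tau+i\phi}z(0)$ once one knows that $\sigma\Xi(\tau)$ maps this subspace into itself; this eigenrelation is the spatio-temporal symmetry of $z$, so $z$ solves~(\ref{eqApp:VarProblem-twisted}), and Theorem~\ref{thm:MainSupp} then certifies that $z$ is a genuine Floquet bundle with multiplier $\mu=e^{\lambda T}$ and perturbation wave number $\phi/(2\pi)$. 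The eigenvalue bookkeeping is globally consistent because iterating the covariance relation gives $(\sigma\Xi(\tau))^N=\Xi(T)^q=M^q$, so twisted-monodromy eigenvalues $\nu=e^{\lambda\tau+i\phi}$ and Floquet multipliers correspond through $\nu^N=\mu^q$.

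The hard part will be the faithfulness of the $d$-dimensional reduction. Over a single relative period $\tau$ the fundamental matrix $\Xi(\tau)$ does \emph{not} commute with $\sigma$ — the variational system is nonautonomous, and the covariance relation only conjugates $\Xi(\tau)$ into $\Phi(2\tau,\tau)$ — so the $\sigma$-isotypic subspaces are not individually $\Xi(\tau)$-invariant and $R(\phi)$ is genuinely a compression rather than a diagonal block. The crux is therefore to show that this compression still transports the correct eigenvalue. I expect this to follow by combining the covariance relation $\Phi(t+\tau,s+\tau)=\sigma^{-1}\Phi(t,s)\sigma$ with the commensurability $N\tau=qT$ — which together force $M$ to commute with the twisted monodromy $\sigma\Xi(\tau)$ and thereby constrain how $\Xi(\tau)$ mixes the isotypic components — and with the simplicity of $\mu^\ast$, which via Theorem~\ref{thm:MainSupp}(iii) assigns a unique wave number $\phi^\ast$ and so removes any ambiguity in attaching the eigenvalue $(\mu^\ast)^k$ to the single block $R(\phi^\ast)$.
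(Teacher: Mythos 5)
Your setup is sound as far as it goes: the covariance relation $\Phi(t+\tau,s+\tau)=\sigma^{-1}\Phi(t,s)\sigma$, the identification of the genuinely twisted operator $\sigma\Xi(\tau)$, the commutation $M\,\sigma\Xi(\tau)=\sigma\Xi(\tau)\,M$, and the eigenrelation $\sigma\Xi(\tau)z^{\ast}(0)=e^{\lambda^{\ast}\tau+i\phi^{\ast}}z^{\ast}(0)$ extracted from Theorem~1(iii) are all correct. But the step you defer --- that the compression $R(\phi)=V(\phi)^{H}\Xi(\tau)V(\phi)$ ``still transports the correct eigenvalue'' --- is the entire content of the proposition, and the mechanism you propose cannot deliver it. Commutativity of $M$ with $\sigma\Xi(\tau)$ only gives joint diagonalizability; it does not force eigenvectors into single $\sigma$-isotypic subspaces, and in fact they do not lie there: by Theorem~1(iii) the eigenvector has components $z^{\ast}_{n}(0)=\zeta^{\ast}(-n\tau)e^{in\phi^{\ast}}$, which belongs to $\mathrm{range}\,V(\phi^{\ast})$ only if $\zeta^{\ast}$ is constant on the lattice $\{-n\tau \bmod T\}$ --- already false for the Goldstone mode, where it would force $y'$ to be constant on that lattice. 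What covariance plus commensurability actually yield is $\sigma^{-N_{0}}\Xi(\tau)\sigma^{N_{0}}=\Phi(\tau+M_{0}T,M_{0}T)=\Xi(\tau)$, i.e.\ $[\Xi(\tau),\sigma^{N_{0}}]=0$: an invariant splitting into $\sigma^{N_{0}}$-eigenspaces of dimension $N_{0}d$, not invariance of the $d$-dimensional spaces $\mathrm{range}\,V(\phi)$ under $\sigma\Xi(\tau)$. Writing $P_{\phi^{\ast}}=V(\phi^{\ast})V(\phi^{\ast})^{H}$, what your part (i) needs is $P_{\phi^{\ast}}\,\sigma\Xi(\tau)\,(I-P_{\phi^{\ast}})z^{\ast}(0)=0$ (and part (ii) needs the analogous invariance), and none of the facts you assemble imply this; your argument is incomplete exactly at its crux.

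It is worth seeing how the paper handles this point, because it does not argue through your (correct) combined twisted relation at all. Its Steps 1--2 assert the two factors separately --- $\Xi(\tau)z^{\ast}(0)=e^{\lambda^{\ast}\tau}z^{\ast}(0)$ and $(C\otimes\mathrm{Id}_{d})z^{\ast}(0)=e^{-i\phi^{\ast}}z^{\ast}(0)$ --- i.e.\ that $z^{\ast}(0)$ itself lies in $\mathrm{range}\,V(\phi^{\ast})$, after which the compression step (Step 3) is immediate; in (ii) it promotes $R(\phi)\beta=b\beta$ to $\Xi(\tau)V(\phi)\beta=bV(\phi)\beta$ via the identity $V(\phi)V(\phi)^{H}=\mathrm{Id}_{Nd}$. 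These assertions are strictly stronger than your product relation and do not follow from it; indeed they conflict with the plane-wave form for the reason above (and $V(\phi)V(\phi)^{H}$ is a rank-$d$ projector, not $\mathrm{Id}_{Nd}$, so the right-inverse identity cannot hold). So the place where you honestly stop is precisely where the paper's own proof does its heavy lifting, and the route you sketch (covariance, $\mathcal{M}^{N}=M^{q}$, commutation with $M$, simplicity) cannot reach the paper's Step 3: to complete a proof along your lines you would have to establish the isotypic localization of $z^{\ast}(0)$ itself, or else replace the $d\times d$ compression $R(\phi)$ by the $N_{0}d$-dimensional blocks that the symmetry genuinely provides.
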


\begin{proof} We begin with preliminary remarks. Note that, whenever $\phi=2\pi l/N$ for some $l=1,...,N$,  the vectors $$\nu_j(\phi):=\frac{1}{\sqrt{N}}\left((e^{i\phi},e^{i2\phi},\ldots,1)\otimes e_j\right),\quad (e_j)_l=\delta_{j,l},\quad e_j\in\mathbb{R}^d,  \quad \nu_j(\phi^*)\in\mathbb{C}^{Nd},$$ are eigenvectors of the matrix $C\otimes \mbox{Id}_d$ with corresponding eigenvalue $e^{-i\phi}$, where \begin{equation*}
%\label{eq:C}    
C=\left(\begin{matrix}
    0 & 0 & \dots  & 0 & 1\\
    1 & 0 & \ldots & 0 & 0\\
    0 & 1 & \ldots  & 0 & 0\\
    \vdots & \vdots & \ddots  & \vdots & \vdots \\
    0 & 0 & \ldots  & 0 & 0\\
    0 &  0 & \ldots  & 1 & 0
\end{matrix}\right)
\end{equation*} is the $N$-dimensional circulant (companion) matrix defined earlier in the text and $\otimes$ is the Kronecker product, since
$$(C\otimes \mbox{Id}_d)\nu_j(\phi)=(C\otimes \mbox{Id}_d)(e^{i\phi},e^{i2\phi},\ldots,1)\otimes e_j)=(C(e^{i\phi},e^{i2\phi},\ldots,1))\otimes (\mbox{Id}_de_j)=e^{-i\phi}(e^{i\phi},e^{i2\phi},\ldots,1)\otimes e_j=e^{-i\phi}v_j(\phi).$$
Notice also, that $V(\phi)=[\nu_1(\phi)|\nu_2(\phi)|...|\nu_d(\phi)]$, that is, the columns of $V(\phi)$ correspond to the vectors $\nu_j(\phi)$. 
Since $V(\phi)$ consists of linearly independent column vectors, there exists a left (Moore-Penrose) pseudo-inverse $V(\phi)^+=(V(\phi)^H V(\phi))^{-1}V(\phi)^H,$ where $V(\phi)^H$ is the hermitian (or conjugate) transpose. Specifically, $V(\phi)^+=V(\phi)^H,$ since 
\begin{align*}
    V(\phi)^H V(\phi)&=\frac{1}{N}\left(((e^{i\phi},e^{i2\phi},\ldots,1)\otimes \mbox{Id}_d)^H ((e^{i\phi},e^{i2\phi},\ldots,1)\otimes \mbox{Id}_d)\right)\\
    &=\frac{1}{N}\left(((e^{i\phi},e^{i2\phi},\ldots,1)^H\otimes \mbox{Id}_d) ((e^{i\phi},e^{i2\phi},\ldots,1)\otimes \mbox{Id}_d)\right)\\
    &=\frac{1}{N}\left(((e^{i\phi},e^{i2\phi},\ldots,1)^H(e^{i\phi},e^{i2\phi},\ldots,1))\otimes (\mbox{Id}_d\mbox{Id}_d)\right)=\mbox{Id}_d.
\end{align*}
Analogously, it can be shown that $V(\phi)^H$ is also a right inverse of $V(\phi)$, i.e. $V(\phi)V(\phi)^H=\text{Id}_{Nd}.$ We continue with the proof of the proposition.

\begin{itemize}
\item[$(i)\Rightarrow(ii)$] Let $(\mu^\ast,z^*)$ be a solution tuple for the Floquet problem~(\ref{eqApp:VarProblem}) with $\mu^*$ simple and associated Floquet exponent $\lambda^\ast=\mbox{Log}(\mu^\ast)/T$. By definition, $z^\ast(0)$ is eigenvector of the monodromy matrix M corresponding to the eigenvalue $\mu^\ast$, i.e. $z^\ast(T)=Mz^\ast(0)=\mu^\ast z^\ast(0).$ 

\underline{Step 1}: We show that $z^\ast(0)$ is also an eigenvector to the (twisted) monodromy matrix $\Xi(\tau)$ with associated eigenvalue $(\mu^\ast)^{k}=e^{\lambda^\ast\tau}$, where $k$ is the wave number of DTW $\gamma$. Recall that $k=M_0/N_0$, where $M_0,N_0\in\mathbb{N}$ with $\gcd(M_0,N_0)=1$, and $\tau=kT$. It holds that  $\Xi(\tau)^{N_0}=\Xi(T)^{M_0}=M^{M_0}$ and, as a result, eigenvectors of the twisted monodromy matrix $\Xi(\tau)$ correspond to eigenvectors of the monodromy matrix $M$ and vice versa. Moreover, $\Xi(\tau)^{N_0}z^\ast(0)=M^{M_0}z^\ast(0)=\mu^{M_0}z^\ast(0)$ implying that (by taking the fundamental $N_0$-th root) $$\Xi(\tau)z^\ast(0)=e^{\lambda^\ast TM_0/N_0}z^\ast(0)=e^{\lambda^\ast kT}z^\ast(0)=e^{\lambda^\ast \tau}z^\ast(0).$$

\underline{Step 2}: We show that $z^\ast(0)$ is a linear combination of vectors $\nu_j(\phi^*),~j=1,\ldots,d$. Theorem 1(iii) implies there there exist $\phi^\ast=2\pi l^*/N$, for some $l^*=1,...,N$, (and $\zeta^\ast$) such that 
$$z_{n+1}^\ast(t+\tau)=\zeta^\ast(t-(n+1)\tau+\tau)e^{\lambda^* (t+\tau)+ i(n+1)\phi^*}=e^{\lambda^* \tau +i\phi^*}\zeta^\ast(t-n\tau)e^{\lambda^* t+ in\phi}=e^{\lambda^* \tau + i\phi^\ast}z^\ast_n(t)$$ 
for all $n=1,...,N-1$ and $z^\ast_1(t+\tau)=e^{\lambda^* \tau +i\phi^\ast} z^\ast_N(t).$ As a result, the bundle $z^\ast$ satisfies the twisted boundary condition $z^\ast(\tau)=e^{\lambda^*\tau + i\phi^\ast}[C\otimes \mbox{Id}_d]z^\ast(0)$. Therefore, and by using that $e^{\lambda^*\tau}$ is an eigenvalue of $\Xi(\tau)$ with eigenvector $z^\ast(0)$, we have
$$[C\otimes \mbox{Id}_d]z^\ast(0)=e^{ -i\phi^\ast -\lambda^\ast\tau}e^{\lambda^*\tau + i\phi^\ast}[C\otimes \mbox{Id}_d]z^\ast(0)=e^{ - i\phi^\ast -\lambda^\ast\tau}z^\ast(\tau)=e^{-i\phi^\ast}z^\ast(0)$$
implying that $z^\ast(0)$ can be represented as an eigenvector of the matrix $[C\otimes \mbox{Id}_d]$ corresponding to the eigenvalue $e^{-i\phi^\ast}$. Given that the matrix $[C\otimes \mbox{Id}_d]$ has exactly $d$-many linearly independent eigenvectors corresponding to the eigenvalues $e^{-i\phi^\ast}$, which are of the form $\nu_j(\phi^\ast),$ it follows that $z^\ast(0)$ must be representable as a linear combination of these vectors, i.e. $z^\ast(0)=\sum_{j=1}^d \alpha_j\nu_j(\phi^\ast)$, with unknown coefficients $\alpha_j,~j=1,\ldots,d$.  

\underline{Step 3}: Derivation of the reduced matrix problem. Let $V(\phi^\ast)$ be as in (ii) and $\alpha$ denote the column vector of coefficients $\alpha_j$, $j=1,\ldots,d$, such that $z^\ast(0)=V(\phi^\ast)\alpha.$ Therefore, $$z^\ast(\tau)=\Xi(\tau)z^\ast(0)=\Xi[\tau]V(\phi^\ast)\alpha.$$

By again using that $e^{\lambda^\ast \tau}$ is an eigenvalue of $\Xi(\tau)$ corresponding to $z^\ast(0)$, i.e. $z^\ast(\tau)=e^{\lambda^\ast \tau} z^\ast(0),$ it follows that $$\Xi[\tau]V(\phi^\ast)\alpha=e^{\lambda^\ast \tau}V(\phi^\ast)\alpha.$$
As a result, we have that $$V(\phi^\ast)^H\Xi[\tau]V(\phi^\ast)\alpha = e^{\lambda^\ast \tau}\alpha,$$ and thus, $\alpha=V(\phi^\ast)^Hz^\ast(0)$ is an eigenvector of $R(\phi^\ast)=V(\phi^\ast)^H\Xi[\tau]V(\phi^\ast)$ with associated eigenvalue $e^{\lambda^\ast \tau}.$

\item[$(ii)\Rightarrow(i)$] Assume as in (ii). Since $\beta$ is an eigenvector of $R(\phi)$ with eigenvalue $b$, then
\begin{equation*}
    \begin{aligned}V(\phi)^H\Xi[\tau]V(\phi)\beta &= b\beta, \\
    \Xi[\tau]V(\phi)\beta &= bV(\phi)\beta,
    \end{aligned}
\end{equation*}
where we used the identity $V(\phi)V(\phi)^H=\text{Id}_{Nd}$. Therefore, $V(\phi)\beta$ is an eigenvector of the twisted monodromy matrix $\Xi(\tau)$ associated with the eigenvalue $b.$ As a result, $b$ is also an eigenvector of the monodromy matrix $M$ with corresponding eigenvalue $b^{1/k}$ (by the same reasoning as employed in (i) Step 1).
The bundle $z^\ast$ is computed as $z^\ast(t)=Z(t,b)=\Xi(t)V(\phi)\beta$. Then the assertion follows.

\end{itemize}
\end{proof}

Proposition 1 motivates the following strategy for effectively computing Floquet exponents of DTW  $\gamma$ in a network of size $N$ corresponding to a fixed $\phi=2\pi l/N$ for some $l=1,\ldots,N$:
\begin{enumerate}
    \item Numerically solve the set of initial value problems $z(0)=\nu_j(\phi), $ $j=1,...,d,$ where $$\nu_j(\phi^\ast):=\frac{1}{\sqrt{N}}\left((e^{i\phi^\ast},e^{i2\phi^\ast},\ldots,1)\otimes e_j\right),\quad (e_j)_l=\delta_{j,l},\quad e_j\in\mathbb{R}^d,  \quad \nu_j(\phi^*)\in\mathbb{C}^{Nd}$$ associated with the original Floquet problem (\ref{eqApp:VarProblem}) until time $\tau$ to obtain vectors $Z(\tau,\nu_j(\phi))=\Xi(\tau)\nu_j(\phi)$.
    \item Collect the (column) vectors $Z(\tau,\nu_j(\phi)),~j=1,\ldots,d$ in a $Nd\times d$-dimensional matrix $$\Xi[\tau]V(\phi)=(Z(\tau,\nu_1(\phi))|\ldots| Z(\tau,\nu_d(\phi))).$$ 
    \item Compute the eigenvalue-eigenvector pairs $(b,\beta)$ of the matrix $V(\phi)^H\Xi[\tau]V(\phi)$ with associated exponents $\lambda=\mbox{Log}(b)/\tau,$ i.e. $(b,\beta)$ such that $V(\phi)^H\Xi[\tau]V(\phi)\beta=b\beta$. 
    \item Compute the bundle $z$ corresponding to $\mu=e^{\lambda T}$ as the solution $z(t)=Z(t,V(\phi)\beta)=\Xi(t)V(\phi)\beta$ corresponding to the initial value problem $z(0)=V(\phi)\beta$.
\end{enumerate}

\subsection{Other applications}\label{sec:OtherApp}
Thus far, we have presented a numerical framework for the computation of MSCs, which enables studying the stability of families of DTWs sharing the same profile $y$, period $T$ and wave number $k$, across different network sizes. The focus of this treatment was the determination of the Floquet exponent spectra without the need to compute the spectrum of each DTW every time the network size changes.

In the following, we describe other situations in which continuation of solutions to the 2PBVP~(\ref{eq:BVP1})–(\ref{eq:BVP7}) can be employed by utilizing different continuation parameters or incorporating additional boundary conditions, depending on the specific situation. For example, in the main manuscript, we demonstrated how the continuation of the 2PBVP~(\ref{eq:BVP1})–(\ref{eq:BVP7})  allows studying the relative position and curvature of the relevant MSC with respect to the imaginary axis evolves as the wave number $k$ changes. This is illustrated in Figs.~2 and 3 of the main manuscript. In this manner, we are able to identify destabilizations of the DTW as a function of the wave number, pinpointing the families of different network sizes where the DTW becomes unstable, as well as the existence of bound two-pulse DTWs in those networks.

Furthermore, going beyond the results in the main manuscript, the  2PBVP formulation~(\ref{eq:BVP1})–(\ref{eq:BVP7}) can also be used in the following cases:
\begin{itemize}
\item Continuing in one (or more) system parameters by individually fixing any (or more) of the solution parameters $T$, $\text{Re}(\lambda)$, $\text{Im}(\lambda)$, or $\phi$. This way one is able to continue in system parameter, e.g. homoclinic bifurcations (by fixing a sufficiently large period $T$), bifurcations corresponding to the crossing of Floquet exponents of the unit circle (by fixing $\text{Re}(\lambda)$) and the destabilization of the DTW with respect to perturbations of a specific wave number $\phi$.
\item  Improving performance when continuing bifurcations in system parameters by replacing the extended $2Nd$-dimensional two-point boundary problem (consisting of (\ref{eqApp:ProblemGeneral}) and (\ref{eqApp:VarProblem})) with the $2d$-dimensional 2PBVP. What is more, for large system sizes, we typically expect low-dimensional bifurcations of discrete traveling waves (e.g. torus bifurcations) to not be isolated but come in families in analogy to their continuous counterparts (such as Turing bifurcations). Indeed, our method provides a unique angle on computing these spatially extended instabilities as the crossing of an MSC of the imaginary axis (together with the associated FE spectrum on it).
\item Approximating the changes of curvature of the MSC close to $\lambda=0$, to efficiently compute when the DTW is unstable.  This can be done,e.g., by fixing $\phi$ to a small number and detecting sign changes of $\text{Re}(\lambda)$ as continuation parameters are varied.
\item Calculating effectively the full Floquet exponent spectrum, as well as the relevant parts of it, in specific applications. Firstly, Proposition 1 allows for computing FEs corresponding to a specific wave number to understand whether perturbations with this (spatial) wave number will grow or decay. This is independent of the remaining part of the spectrum, which can be stable or unstable and typically requires more sophisticated methods (such as Gram-Schmidt orthogonalization), which again is computationally expensive for large systems.
The MSC can then be used to also investigate effectively the stability of nearby wave numbers and, therefore, to determine different types of instabilities that must have occurred together with the typical phenomena associated with them. In situations, where the full spectrum is required (such as e.g. computing isostable curves \cite{nicks2024insights}) the master stability framework can be used to effectively compute the spectrum independently of the network size.
\end{itemize}

%\bibliographystyle{apsrev4-1} 
%\bibliography{project}% Produces the bibliography via BibTeX.
%\input{RG_waves.bbl}

\end{document}